\documentclass[11pt]{amsart}
\usepackage[utf8]{inputenc}
 \usepackage{amsaddr}
 
\usepackage{tikz}
\usepackage{multirow}
\usepackage{amsmath,amssymb,amsthm}
\usepackage[linesnumbered,ruled,vlined]{algorithm2e}
\SetKwInput{KwInput}{Input}                
\SetKwInput{KwOutput}{Output}              

\theoremstyle{plain}
\newtheorem{proposition}{Proposition}

\newtheorem{lemma}{Lemma}
\theoremstyle{definition}
\newtheorem{definition}{Definition}
\theoremstyle{remark}

\begin{document}

\title[Weighted kappa with Markov moves]{Analysis of the weighted kappa and its maximum with Markov moves}
\author{Fabio Rapallo}
\address{Department of Economics \\ University of Genova, Italy}
\email{fabio.rapallo@unige.it}


\maketitle

\begin{abstract}
In this paper the notion of Markov move from Algebraic Statistics is used to analyze the weighted kappa indices in rater agreement problems. In particular, the problem of the maximum kappa and its dependence on the choice of the weighting schemes are discussed. The Markov moves are also used in a simulated annealing algorithm to actually find the configuration of maximum agreement.
\medskip

{\bf Keywords:} Algebraic statistics, Markov bases, Ordinal data; Rater agreement; Pairwise agreement; Simulated annealing
\medskip

{\bf AMS Math Subject Classification:} 62H17; 62H20; 62P10; 62P15 
\end{abstract}

\section{Introduction}

The analysis of rater agreement is currently one among the most active and relevant research areas in categorical data analysis. Despite the large number of papers devoted to this problem, still there are open questions, from the point of view of both theory and applications. Even in the simplest case where two or more observers rate a common set of $n$ objects on the same rating scale, there are in literature several indices to summarize the agreement, each of them with its own paradoxes, counterexamples, and unexpected behaviors. Indeed, the large spectrum of possible indices is the symptom of the difficulties in the interpretation of the results. Also in the context of statistical modeling, several recipes have been proposed in order to account for the rater agreement, but even in this framework no definite answers are available. For a general survey on the main techniques for rater agreement analysis, the reader can refer to \cite{fleiss|etal:03}, \cite{voneye|mun:04} or \cite{shoukri:10}.

The most popular measures of agreement, at least in the two-rater case, are the Cohen's $\kappa$ and the weighted Cohen's $\kappa_w$. First introduced in \cite{cohen:60} and \cite{cohen:68} respectively, such two indices have been analyzed, criticized, generalized, in order to adapt to the multi-rater case, to incomplete rating schemes, and so on. For instance, in the multi-rater case the most popular extension of the Cohen's $\kappa$ is the Conger's $\kappa_C$ introduced in \cite{conger:80}, where the pairwise agreement in all possible two-way marginal tables is considered, see the discussion and the examples in \cite{vanbelle:18}. In all cases, the rationale behind such indices is the measurement of the rater agreement beyond chance, in the sense that under complete independence of the raters the value of the indices should be zero. In this paper we restrict our attention to the weighted Cohen's $\kappa_w$ and its extensions to the multi-rater case, paying special attention to the connections between the choice of the weighting scheme and the maximum attainable value of $\kappa_w$. 

The unweighted version of the Cohen's $\kappa$ only distinguishes between agreement cells and disagreement cells, and thus it is used in case of ratings on a nominal scale. When the rating scale is ordinal, or in general when there are some disagreements to be considered more serious than others, then the weighted $\kappa_w$ should be preferred. In this case, the choice of the weights is a delicate issue, and it is known that different weighting schemes lead to quite different results. In addition, the main weighting schemes (i.e., linear or quadratic) have been studied extensively. Each of them has its own theoretical properties. We will provide some more details and references in the next section, after the definitions. 

As another issue of the kappa-type statistics, it is known that its interpretation is not straightforward since the theoretical maximum is $1$, as a consequence of the normalization, but such theoretical maximum is not attainable when the marginal distributions are not homogeneous. Some attempts has been made in the direction of finding the maximum value of the kappa statistics. For instance, a procedure has been introduced in \cite{umesh|etal:89}, where the maximum agreement is found by fixing the observed agreement and by varying the marginal distributions.

In a recent paper by Kv{\aa}lseth \cite{kvalseth:18}, the dependence of the weighted $\kappa_w$ on the choice of the weights is highlighted, and the relevance of the interpretation of the $\kappa_w$ values as functions of the weighting schemes is discussed extensively. The author motivates its study on the properties of the weights claiming that, without a clear understanding of the connections between the weights and the $\kappa_w$, the weighted $kappa_w$ itself is not a satisfactory index to describe the agreement in an ordinal context. 

The aim of this paper is twofold. First, we give insights in the interpretation of the weights in the framework of weighted kappa statistics by means of the  Markov moves from Algebraic Statistics. In particular, we show how the properties of the weights affect the configuration of maximum agreement. When fixing the marginal distributions, the computation of the maximum attainable kappa is relatively easy in the case of the unweighted $\kappa$ in two-rater setting, see e.g. \cite{sim|wright:05}, the problem is less simple in the weighted case or in the multi-rater setting. The use of Algebraic Statistics for rater agreement analysis has been considered in other works, but mainly for computational purposes. For instance, the use of Markov bases to make exact tests in this framework can be found in \cite{rapallo:03} and \cite{rapallo:05}.

As a second issue, we move to computations and, exploiting again the Markov moves, we introduce a simple simulated annealing algorithm to find the maximum agreement. In particular, we assume the marginal distributions as fixed and we consider all multivariate tables with fixed one-way margins. Also in this context the relevance of the choice of the weights is discussed. The computation of the maximum kappa is implemented in the R package {\tt rel}, \cite{rel:20} and, for the two-rater setting, in some online calculators. 
However, such resources give only an approximate result. Moreover, our proposed algorithm can be applied with a general weighting scheme, not limited to linear or quadratic. 

The paper is organized as follows. In Section \ref{sect:basics} we recall the notation and the basic definitions about the Cohen's $\kappa$, the weighted Cohen's $\kappa_w$, and the Conger's $\kappa_C$ and $\kappa_{C,w}$ for the multi-rater case. In Section \ref{sect:markov} we compute the Markov bases for the rater agreement problem in the two-rater and in the multi-rater cases. Such Markov bases are used in Section \ref{sect:interpret} to state some results on the structure of the configuration of maximum agreement in connection with the (metric) properties of the weighting schemes. Section \ref{sect:algo} is devoted to the illustration of a Simulated Annealing algorithm to actually find the configuration of maximum agreement, while in Section \ref{sect:simst} the results of a simulation study are presented and discussed. Finally, \ref{sect:final} contains some concluding remarks and pointers to future directions.

\section{Notation and basic recalls} \label{sect:basics}

In this section we briefly review the basic definitions about the kappa-type indices of agreement which will be used in the paper. We first focus on the two-rater setting.

Let us consider the ratings of the two raters as a pair of random variables $X$ and $Y$ on the set $\{1, \ldots, k\}$, or more generally on a finite ground set $\{x_1, \ldots, x_k\}$. Let us denote with $p_{ij}$ the probability of the cell $(i,j)$, and with $p_{i+}$ ($i=1, \ldots, k$) and $p_{+j}$ ($j=1, \ldots, k$) the marginal distributions of $X$ and $Y$, respectively. The Cohen's $\kappa$ is defined as:
\begin{equation} \label{kappa}
\kappa = \frac {\sum_{i=1}^k p_{ii} - \sum_{i=1}^k p_{i+}p_{+i}} {1 - \sum_{i=1}^k p_{i+}p_{+i}} = 1 - \frac {\sum_{(i,j) \in D} p_{ij}}{\sum_{(i,j) \in D} p_{i+}p_{+j}} \, ,
\end{equation}
where $D=\{(i,j) \ : \ i \ne j\}$ is the set of the disagreement cells. 

Given a matrix of weights of agreement $W=(w_{ij})$ with $0 \le w_{ij} < 1$ for all $i,j$ with $i \ne j$, and $w_{ii}=1$ for all $i$, the weighted kappa is:
\begin{equation} \label{wkappa}
\kappa_w = \frac {\sum_{i,j=1}^k w_{ij}p_{ij} - \sum_{i,j=1}^k w_{ij}p_{i+}p_{+j}} {1 - \sum_{i,j=1}^k w_{ij}p_{i+}p_{+j}} = 1 - \frac {\sum_{(i,j) \in D} u_{ij} p_{ij}}{\sum_{(i,j) \in D} u_{i,j}p_{i+}p_{+j}} \, ,
\end{equation}
where in the second expression $u_{ij}=1-w_{ij}$. Although not strictly necessary for the theory of rater agreement, we suppose that the matrices $W$ and $U=(u_{ij})$ are symmetric, because some of our results are based on the properties of the metric functions, where symmetry is one the axioms. In the previous formulas, the $u_{ij}$ are weights of disagreement, and it is easily seen that $u_{ij}=0$ on the main diagonal and $0<u_{ij}\le 1$ for $i \ne j$. 

When a sample is available, the indices $\kappa$ and $\kappa_w$ are estimated by replacing in Equations \eqref{kappa} and \eqref{wkappa} the theoretical probabilities with the corresponding sample proportions. On a sample of size $N$, we denote with $n_{ij}$ the count of the cell $(i,j)$ and therefore sample proportion is $\hat p_{ij}=n_{ij}/N$. 

Among the most commonly used weighting schemes there are:
\begin{itemize}
\item[(a)] the quadratic weights (see \cite{fleiss|cohen:73}):
\begin{equation}\label{weight-squared}
u_{ij} = \frac {(i-j)^2} {(k-1)^2}
\end{equation}
\item[(b)] the linear weights (see \cite{cicchetti|allison:71}):
\begin{equation}\label{weight-linear}
u_{ij} = \frac {|i-j|}{k-1}
\end{equation}
\end{itemize}
Moreover, the unweighted $\kappa$ in Eq.~\eqref{kappa} can be considered as a special case of the weighted $\kappa_w$ by setting
\begin{equation}\label{weight-dirac}
u_{ij} = \left\{\begin{array}{ll}0 \ & \ \mbox{ for } i=j \\
1 \ & \ \mbox{ otherwise } \end{array}\right.
\end{equation}
Recent discussions on the choice, use, and interpretation of the different weighting schemes can be found in \cite{warrens:13} and  \cite{kvalseth:18}. On one side, the main reasons in favor of the quadratic and linear weights are essentially of theoretical nature. In fact, the quadratic weights lead to the interpretation of the weighted kappa as the Intraclass Correlation Coefficient, see \cite{schuster:04}. on the other side, the linear weights allow us to define the weighted kappa as a weighted average of kappas for the $2 \times 2$ tables obtained by collapsing adjacent categories, see \cite{vanbelle|albert:09}. However, undesirable behaviours of the weighted kappa for some data set can be observed under both choices of the weights, and thus the interpretation of the value of kappa is not easy in general. We will come back to this issue later in the paper, when we will use Markov moves to find the maximum agreement. Another interesting interpretation of the linear and quadratic weights is discussed in \cite{li:16}, where matrix $W$ is decomposed into a sum of suitable rank-one matrices.

In order to illustrate our theory, we also consider a square-root version of the weights, namely:
\begin{equation}\label{weight-sqrt}
u_{ij} =  \frac {\sqrt{|i-j|}}{\sqrt{k-1}}
\end{equation}
As a preliminary remark, notice that the linear weights in Eq.~\eqref{weight-linear} and the square-root weights in Eq.~\eqref{weight-squared} define a distance in ${\mathbb R}$, while the quadratic weights in Eq.~\eqref{weight-squared} do not, because the triangular inequality is not satisfied. Usually, functions like the quadratic weights are called dissimilarities. 

In this paper, when the matrix $U$ is a distance matrix, we name the weights as ``distance weights'', and in particular we refer to the weights in Eq.~\eqref{weight-sqrt} as to the sqrt weights. 

Observe that the distance defined by the linear weights is the usual Euclidean distance in ${\mathbb R}$, and it has a special behavior in terms of the triangular inequality. In fact, for $i<j<h$ the triangular inequality becomes an equality: $u_{ih}=u_{ij}+u_{jh}$. We will exploit this property later in the paper.

In this paper we use the notation $\kappa_q$, $\kappa_w$, $\kappa_s$ when quadratic, linear, or sqrt weights are used, while we denote with $\kappa_w$ the kappa with a general weight.

In the multi-rater setting, we consider the ratings of $r$ raters as $r$ random variables $X_1, \ldots, X_r$ on the same set $\{1, \ldots, k\}$, or more generally on $\{x_1, \ldots, x_k\}$. The observed data form a $m^r$ table. We denote with $p_{i_1 \ldots i_r}$ the probability of the cell $(i_1, \ldots, i_r)$, and with $n_{i_1 \ldots i_r}$ the corresponding observed count on a sample of size $N$. Moreover, we denote with $p^{(u)}$ the one-dimensional marginal distribution of $X_u$, and with $p^{(uv)}$ the two-dimensional marginal distribution of the pair $(X_u,X_v)$.

To measure the agreement in the multi-rater setting, it is customary to use the Conger's $\kappa_C$, originally introduced in \cite{conger:80} and the re-analyzed in several papers, see e.g. \cite{vanbelle:19}. The Conger's $\kappa_c$ is based on a pairwise rater agreement analysis. It is defined as:
\begin{equation}\label{eq:kappar}
\kappa_C = \frac {p_o - p_e} {1 - p_e}
\end{equation}
where $p_o$ is the mean proportion of agreement between all $r(r-1)/2$ pairs of raters, and similarly $p_e$ is the mean proportion of expected agreement between all $r(r-1)/2$ pairs of raters under independence. In formulas,
\begin{equation}\label{eq:obsagrr}
p_o = \frac 2 {r(r-1)} \sum_{u,v \in \{1, \ldots, r\},u<v} \sum_{i=1}^k p^{(uv)}_{ii}
\end{equation}
and 
\begin{equation}\label{eq:expagrr}
p_e = \frac 2 {r(r-1)} \sum_{u,v \in \{1, \ldots, r\},u<v} \sum_{i=1}^k p^{(u)}_{i}p^{(v)}_{i} \, .
\end{equation}

Since the Conger's $\kappa_C$ is based on the two-way margins of the $k^r$ table, it is easy to define a weighted version of the Conger's kappa as follows:
\begin{equation}\label{eq:wkappar}
\kappa_{C,w} = \frac {p_{o,w} - p_{e,w}} {1 - p_{e,w}}
\end{equation}
with
\begin{equation}\label{eq:wobsagrr}
p_{o,w} = \frac 2 {r(r-1)} \sum_{u,v \in \{1, \ldots, r\},u<v} \sum_{i,j=1}^k w_{ij}p^{(uv)}_{ij}
\end{equation}
and 
\begin{equation}\label{eq:wexpagrr}
p_{e,w} = \frac 2 {r(r-1)} \sum_{u,v \in \{1, \ldots, r\},u<v} \sum_{i,j=1}^k w_{ij}p^{(u)}_{i}p^{(v)}_{j}
\end{equation}
In the definition above, the weights are the same for all pairs $u,v$ of raters, but the definition can be easily extended to the case of different weights on different two-way margins. Also notice that the Conger's $\kappa_C$ can be defined in the general case of $g$-wise agreement, as in the original paper \cite{conger:80}. This is done by taking the $\kappa_{C}$ unchanged in Eq.~\eqref{eq:kappar}, and computing the observed agreement and the expected agreement in Equations \eqref{eq:obsagrr} and \eqref{eq:expagrr} on the $g$-way marginal tables instead of the two-way tables. However, when the weighted version $\kappa_{C,w}$ in Equations \eqref{eq:wkappar}, \eqref{eq:wobsagrr}, and \eqref{eq:wexpagrr} is considered, the pairwise agreement is the most reasonable choice, and the extension to the $g$-wise agreement would require new definitions of the weighting schemes.

\section{Markov bases} \label{sect:markov}

In this section we introduce the main tools from Algebraic Statistics needed in our framework. In particular, we define the notion of Markov basis and we compute the relevant Markov bases for the rater agreement problems.

Let $n$ be an observed contingency table, possibly multi-way. 
An integer-valued statistic is a function $T:{\mathbb N}^{k^r} \longrightarrow {\mathbb N}^s$. Since we need to compute the maximum agreement with fixed marginal distributions we are particularly interested in the function 
\begin{equation}\label{eq:marg2}
T : n \longmapsto ((n_{i+})_{i=1,\ldots,k},(n_{+j})_{j=1,\ldots,k})
\end{equation}
in the two-way case and
\begin{equation}\label{eq:margr}
T : n \longmapsto ((n^{(1)}_i)_{i=1,\ldots,k}, \ldots , (n^{(r)}_i)_{j=1,\ldots,k})
\end{equation}
in the general multi-rater case, where $n^{(s)}_i$ is the $i$-th entry of the marginal distribution of the $s$-th rater.

\begin{definition} \label{fiber-def}
Given a statistic $T$, the fiber (or reference set) of a contingency table $n$ is the set
\begin{equation}\label{fiber}
{\mathcal{F}}_T(n) = \{ n' \in {\mathbb N}^{m^r} \ | \ T(n') = T(n) \} \, .
\end{equation}
\end{definition}

\begin{definition}
A Markov move for the statistic $T$ is an integer-valued table $m$ such that $T(m)=0$. 
\end{definition}

\begin{definition}
A Markov basis for the fiber ${\mathcal F}_T(n)$ of a contingency table $n$ is a set of Markov moves 
\[
{\mathcal M}_{n,T} = \{m^{(1)}, \ldots, m^{(\ell)} \} 
\]
such that for each pair of tables $n',n'' \in {\mathcal F}_T(n)$ there exists a sequence of moves $(m^{(i_1)}, \ldots, m^{(i_Q)})$ such that
\begin{enumerate}
    \item $n'' = n' + \sum_{j=1}^Q m^{(i_j)}$
    
    \item $n' + \sum_{j=1}^q m^{(i_j)} \geq 0$ for all $q=1, \ldots ,Q$.
\end{enumerate}
\end{definition}
In words, a Markov basis is a set of moves which makes the fiber ${\mathcal F}_T(n)$ connected and all intermediate steps are non-negative. In Algebraic Statistics, this is the main tool to define a Metropolis-like MCMC algorithm for doing exact inference for contingency tables. For a comprehensive introduction to Markov bases and their use in Statistics, the reader can refer to the books \cite{sullivant:18} and \cite{aoki|etal:12}.

Since we are particularly interested in the computation of the maximum agreement given the marginal distributions, we need the Markov bases for the statistic $T$ in Equations \eqref{eq:marg2} and \eqref{eq:margr}

Following \cite{diaconis|sturmfels:98}, in the general case the computation of a Markov basis needs symbolic computation and is actually not feasible for large-sized tables. However, the Markov bases for the fibers considered in this paper can be theoretically characterized and therefore no symbolic computation is involved.  For an overview on the computation of Markov bases through symbolic software, the underlying computational problems, and the actual limitations for large tables, the reader can refer to \cite{aoki|etal:12}.

As a first step, we recall a result from \cite{diaconis|sturmfels:98} about the Markov basis for two-way tables with fixed margins. 

\begin{definition} \label{def:bm2}
Let $i,i'$ be two distinct row indices and $j,j'$ be two distinct column indices. A basic move is a move $m$ such that
\[
m_{ij}=m_{i'j'}=+1, \qquad m_{ij'}=m_{i'j}=-1
\]
and is $0$ otherwise.
\end{definition}

Some examples of basic moves in the case of $4$ categories are given in Fig.~\ref{fig:mosse2}. Such moves have different behavior in terms of agreement. We will discuss all these types of moves in the next section.

\begin{figure}
    \centering
    \begin{tabular}{ccc}
$\begin{pmatrix}
+1 & 0 & -1 & 0  \\
0 & 0 & 0 & 0 \\
-1 & 0 & +1 & 0 \\
0 & 0 & 0 & 0 
\end{pmatrix}
$
&
$\qquad$
&
$\begin{pmatrix}
0 & -1 & 0 & +1  \\
0 & +1 & 0 & -1 \\
0 & 0 & 0 & 0  \\
0 & 0 & 0 & 0 
\end{pmatrix}
$ \vspace{5pt} \\
(a) & $\qquad$ & (b) \\
$\qquad$ & $\qquad$ & $\qquad$ \\
$\begin{pmatrix}
0 & 0 & 0 & 0  \\
0 & +1 & 0 & -1 \\
0 & -1 & 0  & +1 \\
0 & 0 & 0 & 0 
\end{pmatrix}
$
&
$\qquad$
&
$\begin{pmatrix}
0 & 0 & -1 & +1  \\
0 & 0 & +1 & -1 \\
0 & 0 & 0 & 0  \\
0 & 0 & 0 & 0 
\end{pmatrix} 
$ \vspace{5pt} \\
(c) & $\qquad$ & (d) \\

\end{tabular}
 \caption{Four basic moves for the two-rater problem. (a): two non-zero elements on the diagonal; (b): one non-zero element on the diagonal, the move lies on the upper triangle; (c): one non-zero element on the diagonal, the move lies on both the upper and the lower triangle; (d): no non-zero elements on the diagonal.}
    \label{fig:mosse2}
\end{figure}

\begin{proposition}
The set of basic moves in Def.~\ref{def:bm2} is a Markov basis for the fiber in Eq.~\eqref{fiber} for the two-rater problem.
\end{proposition}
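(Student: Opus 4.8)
This is essentially the classical Diaconis--Sturmfels description of a Markov basis for two-way tables with fixed one-way margins, so one route is simply to cite \cite{diaconis|sturmfels:98}; below I sketch a self-contained argument in two steps. \emph{Step 1 (basic moves are Markov moves).} I would first check that every basic move $m$ of Def.~\ref{def:bm2} satisfies $T(m)=0$: in each of the two rows $i,i'$ meeting the support of $m$ the only nonzero entries are one $+1$ and one $-1$, so all row sums of $m$ vanish, and symmetrically all column sums vanish. Hence $m_{i+}=0$ and $m_{+j}=0$ for every $i,j$, i.e.\ $T(m)=0$, so adding a basic move to a table preserves the margins in \eqref{eq:marg2}. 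Thus the basic moves indeed lie in the fiber direction; what remains is connectivity with nonnegativity.

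\emph{Step 2 (connectivity).} Given $n',n''\in\mathcal F_T(n)$, I would transform $n'$ into $n''$ by induction on the number of rows. If row $1$ of $n'$ already equals row $1$ of $n''$, I recurse on the sub-array of rows $2,\dots,k$, whose row sums are fixed and whose column sums (the original ones minus the now common first row) are fixed. Otherwise, since the two first rows have equal sum, there are distinct columns $j,j'$ with $n'_{1j}>n''_{1j}$ and $n'_{1j'}<n''_{1j'}$; I apply the basic move $m$ on rows $\{1,i\}$ and columns $\{j,j'\}$ with $m_{1j}=m_{ij'}=-1$ and $m_{1j'}=m_{ij}=+1$. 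For $n'+m\ge 0$ I need $n'_{1j}\ge 1$, which holds because $n'_{1j}>n''_{1j}\ge 0$, and $n'_{ij'}\ge 1$ for a suitable row $i\ne 1$; such a row exists because $\sum_{\ell}n'_{\ell j'}=\sum_{\ell}n''_{\ell j'}\ge n''_{1j'}>n'_{1j'}\ge 0$ forces $\sum_{\ell\ne 1}n'_{\ell j'}>0$. This move strictly decreases $\sum_{j}|n'_{1j}-n''_{1j}|$ by $2$ while keeping the table nonnegative, so after finitely many such moves the first rows agree; then I recurse on rows $2,\dots,k$, noting that a basic move of the sub-array is a basic move of the whole array and that row $1$ is never touched again. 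The base case is a single row, which is determined by the column sums, so there $n'=n''$. Concatenating gives the required sequence of basic moves with nonnegative partial sums.

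\emph{Alternative and main obstacle.} By the Fundamental Theorem of Markov bases of \cite{diaconis|sturmfels:98}, the statement is equivalent to the classical fact that the binomials $x_{ij}x_{i'j'}-x_{ij'}x_{i'j}$ associated with the basic moves generate the toric ideal of the independence model, i.e.\ the ideal of the Segre embedding of $\mathbb P^{k-1}\times\mathbb P^{k-1}$. The only genuinely delicate point in the elementary proof is the nonnegativity of the intermediate tables: naively peeling a basic move off the difference $n''-n'$ can get stuck at a cell of $n'$ that is $0$. The row-by-row reduction above is designed precisely to avoid this, the marginal counting inequality guaranteeing that a usable pivot row $i$ always exists; I expect this bookkeeping to be the part requiring the most care in the writeup.
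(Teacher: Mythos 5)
Your proposal is correct, and it actually does more than the paper, which offers no proof of this proposition at all: the result is simply ``recalled'' from \cite{diaconis|sturmfels:98} as the classical Markov basis for two-way tables with fixed row and column sums. Your Step 1 (each basic move has vanishing row and column sums, hence $T(m)=0$) is immediate and matches what one would check. Your Step 2 is the standard elementary connectivity argument, and you have handled the one genuinely delicate point correctly: the existence of a pivot row $i\neq 1$ with $n'_{ij'}\geq 1$, which you derive from the marginal identity $\sum_{\ell}n'_{\ell j'}=\sum_{\ell}n''_{\ell j'}\geq n''_{1j'}>n'_{1j'}$, guaranteeing $\sum_{\ell\neq 1}n'_{\ell j'}>0$. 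Together with the observation that the chosen move decreases $\sum_j |n'_{1j}-n''_{1j}|$ by $2$ while preserving nonnegativity, and that the recursion on rows $2,\dots,k$ stays inside a fiber of the same type (the subtable margins are again fixed once row $1$ is matched), this gives a complete proof. The trade-off between the two routes is the usual one: citing Diaconis--Sturmfels (equivalently, the fact that the $2\times 2$ minors generate the toric ideal of the Segre embedding, as you note) is shorter and situates the result in the algebraic framework the paper relies on for the multi-rater case via toric fiber products; your inductive argument is self-contained, purely combinatorial, and makes explicit why intermediate nonnegativity never fails, which the citation hides.
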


The basic moves in the multi-rater setting are defined by extending the previous definition to more than two dimensions. Informally, one takes two $+1$'s in two cells with at least two distinct coordinates, and then arranges the $-1$'s in order to have the correct projections in all two-way marginals. More formally, we can state the following definition.

\begin{definition} \label{def:bmr}
A basic move $m$ for the multi-rater problem is a table with $4$ entries different from $0$:
\begin{itemize}
    \item $m$ is equal to $+1$ in $(i_1, \ldots i_r)$ and in $(i'_1,\ldots, i'_r)$ with at least two different indices. Without loss of generality, suppose that the distinct indices are $i_1, \ldots, i_q$, $q \ge 2$;
    
    \item $m$ is equal to $-1$ in $(j_1,\ldots, j_q,i_{q+1},\ldots, i_r)$ and in  $(j'_1,\ldots, j'_q,i_{q+1},\ldots, i_r)$ with
    \begin{itemize}
        \item[(i)] $j_s=i_s$, $j'_s=i'_s$ for $s \in {\mathcal S}$
        
        \item[(ii)] $j_s=i'_s$, $j'_s=i_s$ for $s \notin {\mathcal S}$
    \end{itemize}
    where ${\mathcal S}$ is a non-empty subset of $\{1, \ldots, q\}$.
\end{itemize}
\end{definition}

It is easy to see that this definition reduces to Def.~\ref{def:bm2} when $r=2$. Two examples of basic moves in the $3^3$ case are illustrated in Fig.~\ref{fig:mosser}.

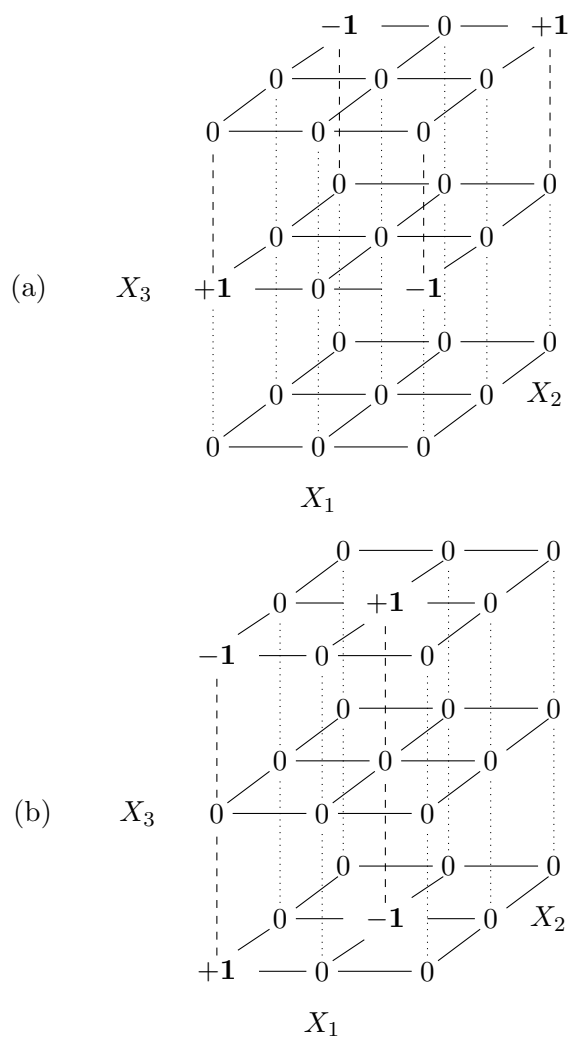
\begin{figure}
    \begin{center}
    \begin{tabular}{c}
\begin{tikzpicture}[scale=0.7]
\draw (3,0) node[scale=1] {$X_1$};
\draw (-0.5,4) node[scale=1] {$X_3$};
\draw (7.3,2) node[scale=1] {$X_2$};

\draw (-2.5,4) node[scale=1] {(a)};

\draw (1,1) node[scale=1] {$0$};
\draw (3,1) node[scale=1] {$0$};
\draw (5,1) node[scale=1] {$0$};
\draw (2.2,2) node[scale=1] {$0$};
\draw (4.2,2) node[scale=1] {$0$};
\draw (6.2,2) node[scale=1] {$0$};
\draw (3.4,3) node[scale=1] {$0$};
\draw (5.4,3) node[scale=1] {$0$};
\draw (7.4,3) node[scale=1] {$0$};


\draw (1,4) node[scale=1] {$\mathbf{+1}$};
\draw (3,4) node[scale=1] {$0$};
\draw (5,4) node[scale=1] {$\mathbf{-1}$};
\draw (2.2,5) node[scale=1] {$0$};
\draw (4.2,5) node[scale=1] {$0$};
\draw (6.2,5) node[scale=1] {$0$};
\draw (3.4,6) node[scale=1] {$0$};
\draw (5.4,6) node[scale=1] {$0$};
\draw (7.4,6) node[scale=1] {$0$};

\draw (1,7) node[scale=1] {$0$};
\draw (3,7) node[scale=1] {$0$};
\draw (5,7) node[scale=1] {$0$};
\draw (2.2,8) node[scale=1] {$0$};
\draw (4.2,8) node[scale=1] {$0$};
\draw (6.2,8) node[scale=1] {$0$};
\draw (3.4,9) node[scale=1] {$\mathbf{-1}$};
\draw (5.4,9) node[scale=1] {$0$};
\draw (7.4,9) node[scale=1] {$\mathbf{+1}$};


\draw (1.3,1) -- (2.7,1);
\draw (3.3,1) -- (4.7,1);
\draw (2.5,2) -- (3.9,2);
\draw (4.5,2) -- (5.9,2);
\draw (3.7,3) -- (5.1,3);
\draw (5.7,3) -- (7.1,3);

\draw (1.2,1.2) -- (2,1.8);
\draw (2.5,2.2) -- (3.3,2.8);
\draw (3.2,1.2) -- (4,1.8);
\draw (4.5,2.2) -- (5.3,2.8);
\draw (5.2,1.2) -- (6,1.8);
\draw (6.5,2.2) -- (7.3,2.8);


\draw (1.8,4) -- (2.7,4);
\draw (3.3,4) -- (4.2,4);
\draw (2.5,5) -- (3.9,5);
\draw (4.5,5) -- (5.9,5);
\draw (3.7,6) -- (5.1,6);
\draw (5.7,6) -- (7.1,6);

\draw (1.4,4.4) -- (2,4.8);
\draw (2.5,5.2) -- (3.3,5.8);
\draw (3.2,4.2) -- (4,4.8);
\draw (4.5,5.2) -- (5.3,5.8);
\draw (5.4,4.4) -- (6,4.8);
\draw (6.5,5.2) -- (7.3,5.8);


\draw (1.3,7) -- (2.7,7);
\draw (3.3,7) -- (4.7,7);
\draw (2.5,8) -- (3.9,8);
\draw (4.5,8) -- (5.9,8);
\draw (4.2,9) -- (5.1,9);
\draw (5.7,9) -- (6.6,9);

\draw (1.2,7.2) -- (2,7.8);
\draw (2.5,8.2) -- (3.1,8.6);
\draw (3.2,7.2) -- (4,7.8);
\draw (4.5,8.2) -- (5.3,8.8);
\draw (5.2,7.2) -- (6,7.8);
\draw (6.5,8.2) -- (7.1,8.6);

\draw[dotted] (1,1.3) -- (1,3.7);
\draw[dotted] (3,1.3) -- (3,3.7);
\draw[dotted] (5,1.3) -- (5,3.7);

\draw[dotted] (2.2,2.3) -- (2.2,4.7);
\draw[dotted] (4.2,2.3) -- (4.2,4.7);
\draw[dotted] (6.2,2.3) -- (6.2,4.7);

\draw[dotted] (3.4,3.3) -- (3.4,5.7);
\draw[dotted] (5.4,3.3) -- (5.4,5.7);
\draw[dotted] (7.4,3.3) -- (7.4,5.7);

\draw[dashed] (1,4.3) -- (1,6.7);
\draw[dotted] (3,4.3) -- (3,6.7);
\draw[dashed] (5,4.3) -- (5,6.7);

\draw[dotted] (2.2,5.3) -- (2.2,7.7);
\draw[dotted] (4.2,5.3) -- (4.2,7.7);
\draw[dotted] (6.2,5.3) -- (6.2,7.7);

\draw[dashed] (3.4,6.3) -- (3.4,8.7);
\draw[dotted] (5.4,6.3) -- (5.4,8.7);
\draw[dashed] (7.4,6.3) -- (7.4,8.7);
\end{tikzpicture} \\

\begin{tikzpicture}[scale=0.7]
\draw (3,0) node[scale=1] {$X_1$};
\draw (-0.5,4) node[scale=1] {$X_3$};
\draw (7.3,2) node[scale=1] {$X_2$};

\draw (-2.5,4) node[scale=1] {(b)};

\draw (1,1) node[scale=1] {$\mathbf{+1}$};
\draw (3,1) node[scale=1] {$0$};
\draw (5,1) node[scale=1] {$0$};
\draw (2.2,2) node[scale=1] {$0$};
\draw (4.2,2) node[scale=1] {$\mathbf{-1}$};
\draw (6.2,2) node[scale=1] {$0$};
\draw (3.4,3) node[scale=1] {$0$};
\draw (5.4,3) node[scale=1] {$0$};
\draw (7.4,3) node[scale=1] {$0$};


\draw (1,4) node[scale=1] {$0$};
\draw (3,4) node[scale=1] {$0$};
\draw (5,4) node[scale=1] {$0$};
\draw (2.2,5) node[scale=1] {$0$};
\draw (4.2,5) node[scale=1] {$0$};
\draw (6.2,5) node[scale=1] {$0$};
\draw (3.4,6) node[scale=1] {$0$};
\draw (5.4,6) node[scale=1] {$0$};
\draw (7.4,6) node[scale=1] {$0$};

\draw (1,7) node[scale=1] {$\mathbf{-1}$};
\draw (3,7) node[scale=1] {$0$};
\draw (5,7) node[scale=1] {$0$};
\draw (2.2,8) node[scale=1] {$0$};
\draw (4.2,8) node[scale=1] {$\mathbf{+1}$};
\draw (6.2,8) node[scale=1] {$0$};
\draw (3.4,9) node[scale=1] {$0$};
\draw (5.4,9) node[scale=1] {$0$};
\draw (7.4,9) node[scale=1] {$0$};


\draw (1.8,1) -- (2.7,1);
\draw (3.3,1) -- (4.7,1);
\draw (2.5,2) -- (3.4,2);
\draw (5.0,2) -- (5.9,2);
\draw (3.7,3) -- (5.1,3);
\draw (5.7,3) -- (7.1,3);

\draw (1.4,1.4) -- (2,1.8);
\draw (2.5,2.2) -- (3.3,2.8);
\draw (3.2,1.2) -- (3.8,1.6);
\draw (4.7,2.4) -- (5.3,2.8);
\draw (5.2,1.2) -- (6,1.8);
\draw (6.5,2.2) -- (7.3,2.8);


\draw (1.3,4) -- (2.7,4);
\draw (3.3,4) -- (4.7,4);
\draw (2.5,5) -- (3.9,5);
\draw (4.5,5) -- (5.9,5);
\draw (3.7,6) -- (5.1,6);
\draw (5.7,6) -- (7.1,6);

\draw (1.2,4.2) -- (2,4.8);
\draw (2.5,5.2) -- (3.3,5.8);
\draw (3.2,4.2) -- (4,4.8);
\draw (4.5,5.2) -- (5.3,5.8);
\draw (5.2,4.2) -- (6,4.8);
\draw (6.5,5.2) -- (7.3,5.8);


\draw (1.8,7) -- (2.7,7);
\draw (3.3,7) -- (4.7,7);
\draw (2.5,8) -- (3.4,8);
\draw (5.0,8) -- (5.9,8);
\draw (3.7,9) -- (5.1,9);
\draw (5.7,9) -- (7.1,9);

\draw (1.4,7.4) -- (2,7.8);
\draw (2.5,8.2) -- (3.3,8.8);
\draw (3.2,7.2) -- (3.8,7.6);
\draw (4.7,8.4) -- (5.3,8.8);
\draw (5.2,7.2) -- (6,7.8);
\draw (6.5,8.2) -- (7.3,8.8);

\draw[dashed] (1,1.3) -- (1,3.7);
\draw[dotted] (3,1.3) -- (3,3.7);
\draw[dotted] (5,1.3) -- (5,3.7);

\draw[dotted] (2.2,2.3) -- (2.2,4.7);
\draw[dashed] (4.2,2.3) -- (4.2,4.7);
\draw[dotted] (6.2,2.3) -- (6.2,4.7);

\draw[dotted] (3.4,3.3) -- (3.4,5.7);
\draw[dotted] (5.4,3.3) -- (5.4,5.7);
\draw[dotted] (7.4,3.3) -- (7.4,5.7);

\draw[dashed] (1,4.3) -- (1,6.7);
\draw[dotted] (3,4.3) -- (3,6.7);
\draw[dotted] (5,4.3) -- (5,6.7);

\draw[dotted] (2.2,5.3) -- (2.2,7.7);
\draw[dashed] (4.2,5.3) -- (4.2,7.7);
\draw[dotted] (6.2,5.3) -- (6.2,7.7);

\draw[dotted] (3.4,6.3) -- (3.4,8.7);
\draw[dotted] (5.4,6.3) -- (5.4,8.7);
\draw[dotted] (7.4,6.3) -- (7.4,8.7);
\end{tikzpicture} \\
\end{tabular}

 \caption{Two basic moves for the three-rater problem. A move of type (a) and a move of type (b) from Prop.~\ref{prop:mosser}.}
    \label{fig:mosser}
    \end{center}
\end{figure}

The fact that basic moves are enough to connect the fiber in Eq. \eqref{fiber} can be derived from the theory of toric fiber products to be found in \cite{sullivant:07}. This allows us to avoid symbolic computations and to make available the relevant Markov bases also for large-sized tables.

\begin{proposition} \label{prop:mosser}
The set of basic moves in Def.~\ref{def:bmr} is a Markov basis for the fiber in Eq.~\eqref{fiber} when $r>2$.
\end{proposition}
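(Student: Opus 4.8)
The plan is to prove the statement by induction on the number $r$ of raters, using the theory of toric fiber products of \cite{sullivant:07} to pass from $r-1$ to $r$; the base case $r=2$ is exactly the two-rater result recalled above, i.e. the classical description of the Markov basis of two-way tables with fixed margins \cite{diaconis|sturmfels:98}. The first step is to recast the claim algebraically. By the fundamental theorem of Markov bases \cite{diaconis|sturmfels:98}, a finite set of integer tables $m$ with $T(m)=0$ is a Markov basis for \emph{all} fibers of $T$ (hence, in particular, for the one in Eq.~\eqref{fiber}) if and only if the corresponding binomials generate the toric ideal $I_T$ attached to the linear map $T$ of Eq.~\eqref{eq:margr}. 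Since $T$ records precisely the one-dimensional margins of every rater, $I_T$ is nothing but the toric ideal of the complete independence model $X_1\perp\cdots\perp X_r$, i.e. the kernel of the monomial map $p_{i_1\cdots i_r}\mapsto t^{(1)}_{i_1}\cdots t^{(r)}_{i_r}$. So it suffices to show that the binomials attached to the basic moves of Def.~\ref{def:bmr} generate $I_T$.

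The second step is to realize $I_T$ as a toric fiber product. Factoring the parameterization as $p_{i_1\cdots i_r}=\big(t^{(1)}_{i_1}\cdots t^{(r-1)}_{i_{r-1}}\big)\cdot t^{(r)}_{i_r}$ exhibits $I_T$ as the toric fiber product, over the one-element vector configuration $\mathcal A=\{a\}$, of the complete independence ideal $I_{r-1}$ on $X_1,\dots,X_{r-1}$ and the (zero) toric ideal of the unit configuration $\{e_1,\dots,e_k\}$ of the single variable $X_r$; both factors are homogeneous with respect to the grading induced by $\mathcal A$, and concatenating the two design matrices gives exactly the design matrix of $I_T$. Because $\mathcal A$ is a single nonzero vector it is linearly independent, so the generating-set theorem for toric fiber products of \cite{sullivant:07} applies (its linear-independence hypothesis being trivially met) and produces a Markov basis of $I_T$ consisting of: (i) lifts of a Markov basis of $I_{r-1}$; (ii) lifts of a Markov basis of the $X_r$-factor, of which there are none since that ideal is $(0)$; and (iii) the gluing quadrics $z_{\mathbf i,\ell}z_{\mathbf i',\ell'}-z_{\mathbf i,\ell'}z_{\mathbf i',\ell}$ for $\mathbf i\neq\mathbf i'$ in $\{1,\dots,k\}^{r-1}$ and $\ell\neq\ell'$ in $\{1,\dots,k\}$, where $z_{\mathbf i,\ell}$ denotes the cell $(i_1,\dots,i_{r-1},\ell)$.

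The third step is to identify the moves in (i) and (iii) as basic moves in the sense of Def.~\ref{def:bmr}. A gluing quadric in (iii) corresponds, as a table, to $+1$ in the cells $(\mathbf i,\ell)$ and $(\mathbf i',\ell')$ and $-1$ in $(\mathbf i,\ell')$ and $(\mathbf i',\ell)$; its distinct coordinates are those where $\mathbf i\neq\mathbf i'$ together with $X_r$, and one checks at once that it is the basic move of Def.~\ref{def:bmr} with $\mathcal S$ equal to the distinct coordinates among $X_1,\dots,X_{r-1}$, so that $X_r$ is singled out on the other side. By the inductive hypothesis, a Markov basis element of $I_{r-1}$ has the form of a basic move $g$ for the $(r-1)$-rater problem with support cells $\mathbf a,\mathbf b,\mathbf c,\mathbf d$; since the $X_r$-factor is trivial, a lift of $g$ must carry equal monomials on its two sides in the $X_r$-variable, hence has support $(\mathbf a,\ell),(\mathbf b,\ell'),(\mathbf c,\ell),(\mathbf d,\ell')$ for some values $\ell,\ell'$ of $X_r$ (up to exchanging $\ell,\ell'$ on the two negative cells). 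Inspecting coordinates shows this is again a basic move of Def.~\ref{def:bmr}: its distinct coordinates are those of $g$ if $\ell=\ell'$, and those of $g$ together with $X_r$ if $\ell\neq\ell'$, with the set $\mathcal S$ being the one for $g$, enlarged by $X_r$ in the second case. Thus every move produced in (i) and (iii) is a basic move for the $r$-rater problem. Finally, a one-line check from Def.~\ref{def:bmr} shows that every such basic move $m$ satisfies $T(m)=0$, since on each coordinate the four signed cells cancel in the corresponding one-dimensional margin; hence the basic moves of Def.~\ref{def:bmr} contain the Markov basis (i)--(iii) and, being themselves moves for $T$, their full set is again a Markov basis. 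This closes the induction.

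The step I expect to be the main obstacle is the clean invocation of the toric fiber product machinery: one must set up the grading so that $\mathcal A$ is genuinely linearly independent (here it is, being a single vector, which is exactly why the sharp form of Sullivant's generating-set theorem is available) and then carry out the elementary but fiddly bookkeeping matching the lifted generators and the gluing quadrics with Def.~\ref{def:bmr}. A minor point of care is the degenerate choice $\mathcal S=\{1,\dots,q\}$ in Def.~\ref{def:bmr}, which must be excluded because it produces the zero table rather than a genuine move. An alternative, purely combinatorial proof would take two tables in a common fiber, use the quadrics of type (iii) to make their $(X_1,\dots,X_{r-1})$-margins coincide, and then apply the inductive hypothesis slice by slice in the $X_r$ direction; but then one has to control the nonnegativity of all intermediate tables by hand, which is precisely the work that the toric fiber product theorem encapsulates.
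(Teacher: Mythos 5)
Your proof is correct and follows essentially the same route as the paper: induction on the number of raters $r$, with the inductive step supplied by Sullivant's theorem on toric fiber products, whose output (lifts of the $(r-1)$-rater Markov basis plus the gluing quadrics in the $X_r$ direction) is then checked to consist of basic moves in the sense of Def.~\ref{def:bmr}. You supply more of the algebraic scaffolding (the identification of $I_T$ with a toric fiber product over a single-vector configuration, and the explicit matching of lifted generators with the sets $\mathcal S$) than the paper does, and your remark that $\mathcal S=\{1,\dots,q\}$ yields the zero table is a fair, minor caveat about the definition.
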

\begin{proof}
First, note that all basic moves in Def.~\ref{def:bmr} are in the kernel of the marginalization map $T$.

Since for $r=2$ the basic moves in Def.~\ref{def:bmr} coincide with the basic moves in \ref{def:bm2}, the result is true when $r=2$. We proceed by induction on $r$. Let us suppose that the result holds for $r-1$ raters, and we prove it for $r$ raters. 

We apply Theorem 13 in \cite{sullivant:07}. Given a Markov basis ${\mathcal M}_{r-1}$ for the problem with $(r-1)$ raters, a Markov basis ${\mathcal M}_{r}$ for the problem with $r$ raters is the union of the following sets of moves
\begin{itemize}
    \item[(a)] split each move of ${\mathcal M}_{r-1}$ by putting one $+1$ and one $-1$ at a given level $h$ of $X_r$ ($h=1, \ldots , k$) and the other $+1$ and $-1$ at a level $h'$ of $X_r$  ($h'=h, \ldots, k$);
    
    \item[(b)] for any two distinct cells $(i_1, \ldots , i_{r-1})$ and $(i'_1, \ldots , i'_{r-1})$ on the $(r-1)$-dimensional table, and for any two distinct levels $h,h'$ of $X_r$, take the move with $+1$ in $(i_1, \ldots , i_{r-1},h)$ and in $(i'_1, \ldots , i'_{r-1},h')$ and with $-1$ in $(i_1, \ldots , i_{r-1},h')$ and in $(i'_1, \ldots , i'_{r-1},h)$.
\end{itemize}
Since all the moves defined in items $(a)$ and $(b)$ above are basic moves, the result is proved.
\end{proof}

As noticed in the Introduction, Markov bases in Algebraic Statistics are usually  defined in Algebraic Statistics in order to perform exact tests with a Metropolis-Hastings algorithm, and therefore to generate all contingency tables with the same value of the sufficient statistics as the observed table. Here, we simply use Markov bases to compute all the tables with fixed margins.

\section{The effect of the Markov moves on the kappa indices} \label{sect:interpret}

In this section, we use the basic moves of the Markov bases in order to better understand the meaning of the weighted kappa. The basic idea is to apply the definition of Markov basis to analyze the rater agreement in connection with the weighting schemes. In fact, the configuration of maximum agreement can be reached with a finite number of Markov moves, starting from the observed table. Therefore, we analyze how the rater agreement changes when a Markov move is applied.

Since we are especially interested in the analysis of rater agreement with fixed margins, the quantity of interest is the observed agreement
\begin{equation}\label{eq:obsagr2}
A_{o,w}(n) = \frac 1 N \sum_{i,j=1}^k w_{ij}n_{ij}
\end{equation}
in the two-rater setting, and
\begin{equation}\label{eq:obsagrr_v2}
A_{o,w}(n) = \frac 2 {r(r-1)} \sum_{u,v \in \{1, \ldots, r\},u<v} \frac 1 N \sum_{i,j=1}^k w_{ij}n^{(uv)}_{ij}
\end{equation}
in the multi-rater setting.

Let us start with some results in the two-rater setting.

\begin{lemma} \label{first-res}
Let $n$ be an observed agreement table, let $i \ne j$ be two indices, and let $m$ be the basic move with 
\[
m_{ii}=m_{jj}=+1, \qquad m_{ij}=m_{ji}=-1 \, .
\]
If $n_{ij}>0$ and $n_{ji}>0$, then
\begin{equation} \label{eq:symmove}
  A_{o,w}(n+m) \ge A_{o,w}(n)  \, .
\end{equation}
\end{lemma}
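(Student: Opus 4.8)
The plan is to evaluate the increment of the observed agreement produced by the move $m$ directly from its definition in Eq.~\eqref{eq:obsagr2}. The key point is that $A_{o,w}$ is a linear functional of the table entries divided by the sample size $N$, and that $m$ is a single basic move of type (a) in Fig.~\ref{fig:mosse2}, so only four entries change and the computation is local.

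First I would record two elementary facts. The entries of $m$ sum to $(+1)+(+1)+(-1)+(-1)=0$, so $n+m$ has the same total count $N$ as $n$; and the hypotheses $n_{ij}>0$, $n_{ji}>0$ ensure $n+m\ge 0$, so $n+m$ is again a genuine contingency table in the fiber $\mathcal{F}_T(n)$ and the left-hand side of \eqref{eq:symmove} is meaningful. With the denominator fixed at $N$, linearity gives
\[
A_{o,w}(n+m)-A_{o,w}(n)=\frac1N\sum_{a,b=1}^k w_{ab}\,m_{ab}
=\frac1N\bigl(w_{ii}+w_{jj}-w_{ij}-w_{ji}\bigr).
\]

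Then I would plug in the structural constraints on the weight matrix: $w_{ii}=w_{jj}=1$ and, by the assumed symmetry of $W$, $w_{ij}=w_{ji}$. The right-hand side collapses to $\tfrac1N(2-2w_{ij})=\tfrac{2u_{ij}}{N}$ with $u_{ij}=1-w_{ij}$. Since $i\ne j$ we have $0<u_{ij}\le 1$, hence the increment equals $2u_{ij}/N$, which is strictly positive and in particular $\ge 0$, proving \eqref{eq:symmove}. As a by-product one sees that every basic move of type (a) strictly raises the observed agreement, by the amount $2u_{ij}/N$ governed by the disagreement weight of the pair $(i,j)$.

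There is essentially no obstacle in this lemma: the argument is a one-line computation once the definitions are unwound. The only point requiring a little care is not to conflate the inequality between real numbers — which would hold for any integer table of this shape — with the statement about contingency tables; this is precisely why the non-negativity hypotheses $n_{ij},n_{ji}>0$ are imposed, so that $n+m$ remains in $\mathbb{N}^{k^2}$ and $A_{o,w}(n+m)$ refers to the agreement of an actual table.
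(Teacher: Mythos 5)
Your proof is correct and is essentially the computation in the paper: both reduce the increment to a local evaluation of the linear functional $A_{o,w}$ on the four nonzero entries of $m$, yielding $2u_{ij}/N\ge 0$ (the paper phrases it via the disagreement weights $u_{ab}$, you via $w_{ab}$, which is the same thing since the entries of $m$ sum to zero). Your added remark that the increase is in fact strict because $u_{ij}>0$ for $i\ne j$ is a valid, slightly sharper observation than the stated inequality.
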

\begin{proof}
From Eq.~\eqref{eq:obsagr2} and using the disagreement weights, we get
\[
A_{o,w}(n+m) - A_{o,w}(n) = A_{o,w}(m) = - \frac 1 N \sum_{i,j=1}^k u_{ij}m_{ij} = \frac 2 N u_{ij} \ge 0 \, . 
\]
\end{proof} 

Lemma \ref{first-res} is valid for all weighting schemes and tells us that if there are positive counts in symmetric cells, then it is always possible to construct an observed table with higher observed agreement by applying a simple move. This is quite intuitive, since Eq.~\eqref{eq:symmove} roughly says that moving counts on the diagonal increases the observed agreement.

Nonetheless, apart from the symmetric basic moves as displayed in  Fig.~\ref{fig:mosse2} (a), for the other types of basic moves there is not a common behavior in terms of observed agreement. 
Remember that, earlier in the paper, we have noticed that some weighting schemes defines a distance on the ground set $\{x_1, \ldots, x_k\}$ while other schemes do not. The following proposition states a partial result when only one cell of the diagonal is involved in the basic move.

\begin{proposition} \label{second-res}
Let $n$ be an observed agreement table, let $i < j < h$ be three indices, and let $m$ be the basic move with 
\[
m_{ih}=m_{jj}=+1, \qquad m_{ij}=m_{jh}=-1 \, .
\]. 
If $n_{ij}>0$ and $n_{jh}>0$ and a distance weighting scheme is used, then
\[
  A_{o,w}(n+m) \ge A_{o,w}(n) \, .  
\]
The same holds if $i>j>h$.
\end{proposition}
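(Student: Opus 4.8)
The plan is to reduce the statement to a single application of the triangle inequality, mimicking the computation in the proof of Lemma~\ref{first-res}. First I would use that $A_{o,w}$ is linear in the table entries, so that $A_{o,w}(n+m) - A_{o,w}(n) = A_{o,w}(m)$ and it suffices to evaluate $A_{o,w}(m)$ for this particular move. Since $m$ has zero total sum, I can pass to the disagreement weights $u_{ab} = 1 - w_{ab}$ and write $\sum_{a,b} w_{ab} m_{ab} = -\sum_{a,b} u_{ab} m_{ab}$, exactly as in the second expression of Eq.~\eqref{wkappa}.

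Next I would substitute the four nonzero entries of $m$. The diagonal entry at $(j,j)$ contributes nothing because $u_{jj} = 0$, and the remaining three yield
\[
A_{o,w}(m) = -\frac{1}{N}\bigl(u_{ih} - u_{ij} - u_{jh}\bigr) = \frac{1}{N}\bigl(u_{ij} + u_{jh} - u_{ih}\bigr).
\]
The conclusion is then immediate: since a distance weighting scheme is used, $U$ satisfies the triangle inequality, so $u_{ih} \le u_{ij} + u_{jh}$ and the right-hand side is nonnegative. The hypotheses $n_{ij} > 0$ and $n_{jh} > 0$ play no role in this inequality; they are there only to ensure $n + m \ge 0$, i.e.\ that the move is admissible. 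The case $i > j > h$ is handled verbatim, because the triangle inequality $u_{ih} \le u_{ij} + u_{jh}$ holds for an arbitrary triple of indices; the ordering hypothesis only serves to single out $j$ as the ``middle'' category.

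There is no real obstacle in the argument itself — the point worth stressing, and arguably the whole content of the proposition, is the role of the distance property. For a general dissimilarity that is not a metric, such as the quadratic weights in Eq.~\eqref{weight-squared}, the quantity $u_{ij} + u_{jh} - u_{ih}$ may be negative, and then the same move can decrease the observed agreement. So the work is conceptual rather than computational: recognizing that the sign of $A_{o,w}(m)$ is governed precisely by the triangular inequality, which is exactly what justifies restricting to distance weighting schemes here.
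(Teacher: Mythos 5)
Your proposal is correct and follows essentially the same route as the paper's own proof: reduce to $A_{o,w}(m)$ by linearity, rewrite in terms of the disagreement weights $u_{ab}$, and observe that the resulting quantity $\frac{1}{N}(u_{ij}+u_{jh}-u_{ih})$ is nonnegative precisely by the triangular inequality. Your additional remarks (why the positivity hypotheses only ensure admissibility of the move, and why the reversed ordering is handled identically) are accurate elaborations of points the paper leaves implicit.
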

\begin{proof}
Let us consider the case $i < j < h$. (The other case has a similar proof.) 

From Eq.~\eqref{eq:obsagr2} and using the disagreement weights, we get
\[
A_{o,w}(n+m) - A_{o,w}(n) = A_{o,w}(m) = - \frac 1 N \sum_{i,j=1}^k u_{ij}m_{ij} =
\]
\[
= \frac 1 N \left( u_{ij} + u_{jh}- u_{ih} \right) \ge 0 
\]
by virtue of the triangular inequality.
\end{proof}

In Prop.~\ref{second-res} the move $m$ has one non-zero element on the diagonal. In the case $i < j < h$ the move lies in the upper triangle of the table, while in the case $i>j>h$ lies in the lower one.

Some remarks are now in order. First, note that in Prop.~\ref{second-res} the assumption of distance weights is essential. For weighting schemes derived or not derived from a distance we observe opposite behaviors of the weighted kappa. For instance, let us consider the observed table below:
\[
n=\begin{pmatrix}
4 & 0 & 0 & 0  \\
0 & 4 & 1 & 0 \\
0 & 0 & 4 & 1  \\
0 & 0 & 0 & 4 
\end{pmatrix} \, .
\]
We can apply the move
\[
m=\begin{pmatrix}
0 & 0 & 0 & 0  \\
0 & 0 & -1 & +1 \\
0 & 0 & +1 & -1  \\
0 & 0 & 0 & 0 
\end{pmatrix}
\]
and we obtain 
\[
n'=n+m=\begin{pmatrix}
4 & 0 & 0 & 0  \\
0 & 4 & 0 & 1 \\
0 & 0 & 5 & 0  \\
0 & 0 & 0 & 4 
\end{pmatrix} \, .
\]
Comparing the value of $\kappa_w$ of $n$ and $n'$ we note that:
\begin{itemize}
    \item With a distance weight we have $\kappa_w(n')>\kappa_w(n)$ by virtue of Prop.~\ref{second-res};
    \item With quadratic weights we have $\kappa_q(n')<\kappa_q(n)$;
\item With linear weights we get $\kappa_l(n')=\kappa_l(n)$.
\end{itemize}

While with distance weights the maximum agreement is achieved by maximizing the counts in the diagonal cells, with quadratic weights a certain amount of moderate disagreement is preferred to a small amount of strong disagreement. 

Moreover, from the example above, we observe there is a special behavior of the linear weights, because some Markov moves do not change the value of the weighted kappa. This affects also the problem of finding the configuration with maximum agreement, since in general such a configuration is not unique. We state below a result for linear weights, and we will discuss this issue in the next section from the point of view of computations.

\begin{proposition} \label{prop:linear}
Let us consider four indices $i_1<i_2\le j_1 <j_2$ or $j_1 < j_2 \le i_1 < i_2$, and take the basic move $m$ with $m_{i_1j_1}=m_{i_2j_2}=+1$ and $m_{i_1j_2}=m_{i_2j_1}=-1$. If $n$ is a table with $n_{i_1j_2}>0$ and $n_{i_2j_1}>0$. Using the linear weights we get
\[
\kappa_l(n) = \kappa_l(n+m) \, .
\]
\end{proposition}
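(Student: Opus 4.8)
The plan is to reuse the mechanism behind Lemma~\ref{first-res} and Proposition~\ref{second-res}. Since the basic move $m$ lies in the kernel of the marginalization map $T$, the tables $n$ and $n+m$ have the same one-way margins and the same sample size $N$; consequently every ingredient of $\kappa_l$ in Eq.~\eqref{wkappa} except the observed-agreement term $\sum_{i,j}w_{ij}\hat p_{ij}=A_{o,w}(n)$ is identical for $n$ and $n+m$, so on the fiber $\mathcal F_T(n)$ the index $\kappa_l$ is an increasing affine function of $A_{o,w}$. It therefore suffices to prove $A_{o,l}(n+m)=A_{o,l}(n)$, i.e.\ $A_{o,l}(m)=0$. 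I would first point out that the hypotheses $n_{i_1j_2}>0$ and $n_{i_2j_1}>0$ are used only to guarantee $n+m\ge 0$, that is, that $n+m$ is a genuine element of the fiber of Definition~\ref{fiber-def}, since the two $-1$ entries of $m$ sit precisely in the cells $(i_1,j_2)$ and $(i_2,j_1)$; they play no role in the value of the agreement.

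The core computation, carried out exactly as in the proof of Proposition~\ref{second-res}, is
\[
A_{o,l}(n+m)-A_{o,l}(n)=A_{o,l}(m)=-\frac1N\sum_{i,j=1}^k u_{ij}m_{ij}=-\frac1N\bigl(u_{i_1j_1}+u_{i_2j_2}-u_{i_1j_2}-u_{i_2j_1}\bigr),
\]
so the whole statement reduces to the identity $u_{i_1j_1}+u_{i_2j_2}=u_{i_1j_2}+u_{i_2j_1}$ for the linear weights $u_{ij}=|i-j|/(k-1)$. Under the ordering $i_1<i_2\le j_1<j_2$ all four of $j_1-i_1$, $j_2-i_2$, $j_2-i_1$, $j_1-i_2$ are nonnegative, so the absolute values disappear and both sides equal $(j_1+j_2-i_1-i_2)/(k-1)$; equivalently, one can invoke the collinear additivity of the Euclidean weights noted before the statement ($u_{ac}=u_{ab}+u_{bc}$ for $a\le b\le c$) to write $u_{i_1j_2}=u_{i_1i_2}+u_{i_2j_2}$ and $u_{i_1j_1}=u_{i_1i_2}+u_{i_2j_1}$ and then cancel $u_{i_1i_2}$. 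The symmetric case $j_1<j_2\le i_1<i_2$ is handled identically after using $u_{ij}=u_{ji}$.

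There is no substantive obstacle: the computation is elementary once the reduction to $A_{o,l}(m)=0$ is in place. The two points worth stating explicitly are that the positivity assumptions serve purely to keep $n+m$ inside the fiber, and that the structural reason the agreement change is exactly zero here (rather than merely one-signed, as in Proposition~\ref{second-res}) is the degeneracy of the triangle inequality for the linear weights. One may also add the remark that the same cancellation occurs for any basic move whose row index pair and column index pair span non-overlapping intervals, which is exactly the mechanism responsible for the non-uniqueness of the configuration maximizing $\kappa_l$ observed in the example with the table $n$ preceding the statement.
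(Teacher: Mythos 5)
Your proof is correct and follows essentially the same route as the paper: it reduces the claim to $A_{o,l}(m)=0$ since the fixed margins make $\kappa_l$ an increasing affine function of the observed agreement, and then verifies that $(j_1-i_1)+(j_2-i_2)-(j_2-i_1)-(j_1-i_2)=0$ under the given ordering. The additional remarks on collinear additivity and on the role of the positivity hypotheses are consistent with, and slightly more explicit than, the paper's argument.
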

\begin{proof}
As in the previous proposition, let us consider only the case $i_1<i_2\le j_1 <j_2$.

Notice that the conditions $n_{i_1j_2}>0$ and $n_{i_2j_1}>0$ are needed in order to have a non-negative table $n'=n+m$. Since $n$ and $n'$ have the same margins, it is enough to compare the observed agreement. From Eq.~\eqref{eq:obsagrr_v2} we get:
\[
A_{o,w}(n')-A_{o,w}(n) = A_{o,w}(m) = - \frac 1 N \sum_{i,j=1}^k, u_{ij}m_{ij} = 
\]
\[
= -\frac 1 N \cdot \frac {(j_1-i_i)+(j_2-i_2)-(j_2-i_1)-(j_1-i_2)} {k-1} = 0 \, .
\]
\end{proof}

The condition $i_1<i_2\le j_1 <j_2$ means that we apply a move on one side of the table w.r.t. the diagonal, and one non-zero element of the move is on the diagonal when $i_2=j_1$. Under such a condition, the move does not affect the value of the weighted kappa. 

In view of Prop.~\ref{prop:linear}, the uniqueness of the table with a given value of $\kappa_w$ is not guaranteed under any of weighting schemes, but this issue is especially relevant for the linear weights. To illustrate this, let us consider the table (with synthetic data) in Fig.~\ref{fig:eslin} (a). By direct enumeration of the $644,850$ tables of the fiber, one finds $1,527$ tables with the same margins and with the same value of the weighted kappa with linear weights as the observed table, i.e., $\kappa_l=0.5023$. Among those tables, the weighted kappa with quadratic weights ranges from $\kappa_q=0.3774$ to $\kappa_q=0.7406$. The minimum is achieved in 3 tables, one of which is in Fig.~\ref{fig:eslin} (b), while the maximum is achieved in 3 tables, one of which is in Fig.~\ref{fig:eslin} (c).

\begin{figure}
    \centering
    \begin{tabular}{ccc}
\begin{tabular}{cc|cccc}
& & \multicolumn{4}{c}{$Y$} \\ 
 & & 1 & 2 & 3 & 4 \\ \hline
\multirow{4}{*}{$X$} & 1 & 5 & 3 & 2 & 1 \\
& 2 & 1 & 4 & 3 & 0 \\
& 3 & 0 & 1 & 5 & 1 \\
& 4 & 0 & 1 & 2 & 4 
\end{tabular}
&
\begin{tabular}{cc|cccc}
& & \multicolumn{4}{c}{$Y$} \\ 
 & & 1 & 2 & 3 & 4 \\ \hline
\multirow{4}{*}{$X$} & 1 & 6 & 0 & 1 & 4 \\
& 2 & 0 & 8 & 0 & 0 \\
& 3 & 0 & 0 & 7 & 0 \\
& 4 & 0 & 1 & 4 & 2 
\end{tabular}
&
\begin{tabular}{cc|cccc}
& & \multicolumn{4}{c}{$Y$} \\ 
 & & 1 & 2 & 3 & 4 \\ \hline
\multirow{4}{*}{$X$} & 1 & 6 & 5 & 0 & 0 \\
& 2 & 0 & 3 & 5 & 0 \\
& 3 & 0 & 1 & 2 & 4 \\
& 4 & 0 & 0 & 5 & 2 
\end{tabular} \vspace{5pt}\\
(a) & (b) & (c) \\
\end{tabular}
 \caption{A synthetic observed table (a) and two tables with the same margins and with the same weighted kappa under linear weights (b,c).}
    \label{fig:eslin}
\end{figure}

Let us now turn to the multi-rater setting. From Eq.~\eqref{eq:wkappar}, \eqref{eq:obsagrr}, \eqref{eq:expagrr} it is easy to argue that the effect of a basic move on the value of the Conger's $\kappa_C$ is yielded by the two-way margins of the move. Each two-way projection applies to a two-way margin and gives its own contribution in the sum in Eq.~\eqref{eq:obsagrr}.

The following proposition collects the properties of the two-way margins of a basic move, and its proof is immediate.

\begin{proposition}
Let $m$ be a basic move in the multi-rater case with $r$ raters. Suppose that $m$ is equal to $+1$ in $(i_1, \ldots ,i_r)$ and in $(i'_1,\ldots, i'_r)$ and is equal to $-1$ in $(j_1, \ldots,j_r)$ and in $(j'_1, \ldots, j'_r)$. The projection of $m$ on the pair $(U,V)$ is:
\begin{itemize}
    \item a basic move for the two-way problem, if the four pairs $(i_u, i_v)$, $(i'_u,i'_v)$, $(j_u,j_v)$, $(j'_u,j'_v)$ are all distinct;
    
    \item a null move, otherwise.
\end{itemize}
\end{proposition}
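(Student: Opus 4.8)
The plan is to verify directly that the two-way projection of a basic move $m$ onto a chosen pair of raters $(U,V)$ has exactly the structure described, by tracking which of the four nonzero cells of $m$ survive the marginalization. First I would recall that projecting $m$ onto the pair $(U,V)$ means summing the entries of $m$ over all coordinates other than $U$ and $V$; since $m$ has only four nonzero entries, namely $+1$ at $(i_1,\dots,i_r)$ and $(i'_1,\dots,i'_r)$ and $-1$ at $(j_1,\dots,j_r)$ and $(j'_1,\dots,j'_r)$, the projected table is supported on at most the four pairs $(i_U,i_V)$, $(i'_U,i'_V)$, $(j_U,j_V)$, $(j'_U,j'_V)$, with each contributing $\pm 1$ according to the sign of the corresponding entry of $m$.

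The key observation is that, by the structure of a basic move in Definition~\ref{def:bmr}, these four pairs satisfy a collapsing identity: as unordered (or rather, as a signed sum), the multiset $\{(i_U,i_V),(i'_U,i'_V)\}$ with sign $+1$ and $\{(j_U,j_V),(j'_U,j'_V)\}$ with sign $-1$ always ``cancels coordinate-wise'' in the sense dictated by the definition — for each coordinate $s$, the value $j_s$ is either $i_s$ or $i'_s$, and $j'_s$ is the other one. Hence when we restrict attention to coordinates $U$ and $V$ only, the pair $(j_U,j_V)$ is obtained from $(i_U,i_V)$ and $(i'_U,i'_V)$ by swapping in each of the two slots independently (swap if the slot lies outside $\mathcal S$, keep if inside), and $(j'_U,j'_V)$ is the complementary choice. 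I would then split into cases according to how many of $U,V$ lie in $\mathcal S$ (equivalently, how many slots are swapped): if both slots are kept or both swapped, then $\{(j_U,j_V),(j'_U,j'_V)\} = \{(i_U,i_V),(i'_U,i'_V)\}$ as multisets, so the signed projection is identically zero — a null move; if exactly one slot is swapped, say the $U$-slot, then $(j_U,j_V)=(i'_U,i_V)$ and $(j'_U,j'_V)=(i_U,i'_V)$, so the four pairs are $(i_U,i_V),(i'_U,i'_V),(i'_U,i_V),(i_U,i'_V)$ with signs $+,+,-,-$, which is precisely a basic move of Definition~\ref{def:bm2} provided $i_U\ne i'_U$ and $i_V\ne i'_V$ (otherwise two of the four pairs coincide and again cancellation produces the null move). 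This matches the stated dichotomy: the projection is a basic two-way move exactly when all four projected pairs are distinct, and null otherwise.

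The remaining step is bookkeeping: I would note that ``all four pairs $(i_U,i_V),(i'_U,i'_V),(j_U,j_V),(j'_U,j'_V)$ distinct'' is equivalent to ``exactly one of $U,V$ is swapped \emph{and} the two retained values differ in both the swapped and unswapped slot,'' and that in every other configuration at least two of the four pairs coincide, forcing the signed sum to vanish. Since $i_1,\dots,i_q$ are exactly the coordinates in which $(i_\bullet)$ and $(i'_\bullet)$ differ, and coordinates $q+1,\dots,r$ agree, one also has to handle the case where $U$ or $V$ lies beyond index $q$: there the projected value is automatically equal in the $i$ and $i'$ tables, so that slot behaves like an ``unswapped'' slot and the same case analysis applies. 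I do not expect any genuine obstacle here — the statement really is immediate once the swap structure of Definition~\ref{def:bmr} is written out — so the only thing requiring care is making the case enumeration exhaustive and matching it cleanly to the ``four distinct pairs'' criterion; that is why the paper says ``its proof is immediate.''
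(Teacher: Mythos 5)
Your argument is correct and is exactly the direct verification the paper has in mind when it declares the proof ``immediate'' and omits it: you track the four signed cells under marginalization, use the swap structure of Definition~\ref{def:bmr} to show the two $-1$ pairs are obtained from the two $+1$ pairs by independent slot swaps in coordinates $U$ and $V$, and the case analysis (both kept/both swapped gives cancellation, one swapped gives the rectangle of Definition~\ref{def:bm2} unless a row or column degenerates) matches the ``four distinct pairs'' criterion precisely. No gaps; the handling of coordinates beyond $q$, where the two $+1$ cells already agree, is also correctly reduced to the unswapped case.
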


Note that, following the definition of basic move in Def.~\ref{def:bmr}, it is easy to see that a multi-rater basic move $m$ always yields at least one basic move on some two-dimensional margin.

For example, both the moves for the three-rater problem displayed in Fig.~\ref{fig:mosser} produce a basic move on two two-way margins and a null move on one margin.

In general, the analysis of the effect of the basic moves on the Conger's $\kappa_C$ is more difficult than in the two-rater case. Nevertheless, we can state the following lemma, which generalizes Lemma \ref{first-res}.

\begin{lemma} \label{first-res-multi}
Let $i \ne j$ be two indices, and let $m$ be a basic move with \[
m_{i\ldots i}=m_{j\ldots j}=+1 \, .
\]
If $n+m$ is non-negative, then
\begin{equation}
  A_{o,w}(n+m) \ge A_{o,w}(n)  \, .
\end{equation}
\end{lemma}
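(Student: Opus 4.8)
The plan is to reduce the multi-rater statement to the two-rater Lemma~\ref{first-res} by working pair-by-pair on the two-way margins, exactly as suggested by the paragraph preceding the lemma. Since $m$ is a basic move in the sense of Def.~\ref{def:bmr} with its two $+1$ entries located at the diagonal cells $(i,\ldots,i)$ and $(j,\ldots,j)$, I would first use the displayed formula for $A_{o,w}$ in the multi-rater setting, namely Eq.~\eqref{eq:obsagrr_v2}, to write the difference as a sum over pairs of raters: $A_{o,w}(n+m)-A_{o,w}(n)=A_{o,w}(m)=\frac{2}{r(r-1)}\sum_{u<v}\frac{1}{N}\sum_{a,b}w_{ab}m^{(uv)}_{ab}$. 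Thus it suffices to show that each summand $\sum_{a,b}w_{ab}m^{(uv)}_{ab}$ is $\ge 0$, i.e.\ that the projection of $m$ onto every two-way margin is either a null move or a ``diagonal'' basic move of the type handled by Lemma~\ref{first-res}.

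Next I would examine the projection of $m$ onto an arbitrary pair $(u,v)$, using the explicit structure in Def.~\ref{def:bmr}: the $-1$ entries of $m$ sit at cells obtained from $(i,\ldots,i)$ and $(j,\ldots,j)$ by swapping coordinates on the complement of $\mathcal{S}$, so every coordinate of every nonzero cell of $m$ is either $i$ or $j$. Consequently the projection onto $(u,v)$ has all its mass inside the $2\times 2$ subtable on rows/columns $\{i,j\}$. A short case check on whether coordinates $u$ and $v$ lie in $\mathcal{S}$ or not shows that the projected move is either identically zero (when the $+1$'s and $-1$'s collapse onto the same cells) or is precisely the symmetric basic move $m^{(uv)}_{ii}=m^{(uv)}_{jj}=+1$, $m^{(uv)}_{ij}=m^{(uv)}_{ji}=-1$ of Lemma~\ref{first-res}, possibly with a positive integer multiplicity. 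In the first case the summand is $0$; in the second case, using the disagreement weights $u_{ab}=1-w_{ab}$, the contribution equals $+\frac{2}{N}u_{ij}\ge 0$ (times the multiplicity), exactly as in the proof of Lemma~\ref{first-res}.

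Summing these nonnegative contributions over all pairs $u<v$ gives $A_{o,w}(m)\ge 0$, which is the claim; the hypothesis that $n+m$ is non-negative is only needed so that $n+m$ is a legitimate table in the fiber and the inequality is meaningful. The main (very mild) obstacle is the bookkeeping in the case analysis of the projected move: one must be careful that when both $u\notin\mathcal{S}$ and $v\notin\mathcal{S}$, or when both are in $\mathcal{S}$, the four projected cells need not be distinct, so the projection can degenerate to zero or to a diagonal move with multiplicity $2$ rather than $1$ — but in every case the sign is nonnegative, so the conclusion is unaffected. No triangular-inequality or metric assumption is needed here, in contrast with Prop.~\ref{second-res}, precisely because only the diagonal weights $w_{ii}=1$ and a single off-diagonal weight $w_{ij}$ enter each projected contribution.
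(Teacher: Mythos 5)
Your proposal is correct and follows essentially the same route as the paper, which proves the lemma by observing that every two-way margin of $m$ is either a null move or a basic move of the type covered by Lemma~\ref{first-res}; your write-up simply supplies the case analysis on $\mathcal{S}$ that the paper leaves implicit. The only minor quibble is that the projected move can never occur with multiplicity $2$ (the two $+1$'s always land on the distinct cells $(i,i)$ and $(j,j)$), but this does not affect the argument.
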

\begin{proof}
It is enough to observe that all two-way margins of $m$ are either a null move or a basic move satisfying the hypothesis of Lemma \ref{first-res}.
\end{proof}

\begin{figure}
    \centering
    \begin{tabular}{ccc}
\begin{tabular}{cc|ccc}
& & \multicolumn{3}{c}{$X_3$} \\ 
 & & 1 & 2 & 3 \\ \hline
\multirow{3}{*}{$X_2$} & 1 & 2 & 1 & 0 \\
& 2 & 0 & 1 & 0 \\
& 3 & 0 & 0 & 1 \\
\end{tabular}
&
\begin{tabular}{cc|ccc}
& & \multicolumn{3}{c}{$X_3$} \\ 
 & & 1 & 2 & 3 \\ \hline
\multirow{3}{*}{$X_2$} & 1 & 0 & 1 & 0 \\
& 2 & 1 & 3 & 1 \\
& 3 & 0 & 0 & 0 \\
\end{tabular} 
&
\begin{tabular}{cc|ccc}
& & \multicolumn{3}{c}{$X_3$} \\ 
 & & 1 & 2 & 3 \\ \hline
\multirow{3}{*}{$X_2$} & 1 & 0 & 1 & 0 \\
& 2 & 0 & 1 & 0 \\
& 3 & 0 & 0 & 3 \\
\end{tabular} \vspace{5pt}\\
$X_1=1$ & $X_1=2$ & $X_1=3$ \\
\end{tabular}
 \caption{An observed table with 3 raters and 3 levels.}
    \label{fig:tab3x3}
\end{figure}

Also the multi-rater case the issue of non-uniqueness is especially relevant under linear wights. For instance, let us consider the three way table in Fig.~\ref{fig:tab3x3}. Although the table is rather sparse (the sample size is $16$ in a contingency table with $27$ cells), there are $2,324$ tables with the same value of $\kappa_{C,l}=0.4872$. Under quadratic weights, such tables yield values of $\kappa_{C,q}$ ranging from $0.3364$ to $0.6313$.

\section{Simulated annealing for maximum agreement} \label{sect:algo}

In this section, we show how to use a simulated annealing algorithm to determine the maximum value of the weighted kappa with fixed marginal distributions and to find a table where the maximum is actually reached. The Markov bases introduced in Section \ref{sect:markov} are used in the algorithm to define the neighbours of the contingency tables and to navigate the fiber of an observed table. 

While the computation of the maximum agreement is simple for the unweighted $\kappa$ in the two-rater setting, the problem is not trivial when the weighted $\kappa_w$ is considered, or we use the Conger's $\kappa_C$ or $\kappa_{C_w}$ for the multi-rater problem. 

The MCMC simulated annealing algorithm starts from the observed table and runs at each step $b$ ($b=1, \ldots, B$) as follows. First, we choose a move $m$ in the relevant Markov basis ${\mathcal M}$ and we define $n'=n+m$; if $n'$ is a  non-negative table, then we move the chain from $n$ to $n'$ with a transition probability depending on two factors. On one hand, the transition probability is equal to 1 if the move causes an increase in the observed agreement, while it is less than one if the move causes a decrease in the observed agreement, and this probability is lower the more the decrease is high. On the other hand, the transition probability decreases with the time. 

In practice, in the first part of the walk the MCMC procedure performs exploration, while in the second part it performs exploitation, because the probability of an actual move toward a  table with smaller observer agreement decreases with the  time. With our notation, the formula for the transition probability is:
\[
\min\left\{\exp((A_{o,w}(n')-A_{o,w}(n'))/\tau_b), 1 \right\} \ ,
\]
where $A_{o,w}$ is the observed agreement in the table $n$, and $\tau_b$ is the temperature at time $b$. 

As a special feature of this algorithm, we have added a final step to apply all possible moves with two $+1$ on the main diagonal, thus exploiting the results in Lemmas \ref{first-res} and \ref{first-res-multi}. 

The pseudo-code of the algorithm is in Fig.~\ref{fig:alg}.

The reader can refer to \cite{suman|kumar:06} for a general introduction to simulated annealing in the discrete case and for a discussion on the computational details of the algorithm, as for instance the choice of the temperature function $\tau_b$. In particular, from our experiments, the choice of the function for the temperature decrease does not affect the performance of the algorithm, and thus we have used a temperature of the form $\tau=\tau_0 \cdot d^b$.

\begin{figure}
\begin{algorithm}[H]
\DontPrintSemicolon
  \KwInput{The observed table $n_{\mathrm{obs}}$}
  \KwOutput{A table with maximum agreement}
  \KwData{Markov basis ${\mathcal M}$; Initial temperature $\tau$; number of MCMC steps $B$}
 initialize $n=n_{\mathrm{obs}}$ \\
  \For{$b$ in $1:B$}
    {
        Choose a basic move $m$ in ${\mathcal M}$ \\
        Define $n'=n+m$ \\
        \If{$n' \ge 0$}
        {
         Define $p_t=\min\left\{\exp((A_{o,w}(n')-A_{o,w}(n))/\tau), 1 \right\}$ \\
         Generate $u \sim {\mathcal U}(0,1)$ \\
         \If{$p_t>u$}
         {
          $n=n'$
         }
        }
        Decrease $\tau$
    }
  \For {each move $m$ with two $+1$ on the diagonal}
    {
      Define $n'=n+m$ \\
      \If {$n' \ge 0$}
      {
        $n=n'$
      }
    }
  \Return $n$
  
\end{algorithm} \caption{Simulated annealing for maximum agreement.} \label{fig:alg}
\end{figure} 

Notice that the non-uniqueness of the configuration is still an issue also when finding the maximum, especially using the linear weights. As an example in the two-rater framework, consider again the observed table in Fig.~\ref{fig:eslin} (a). With linear weights there are $5$ tables which reach the maximum value of $\kappa_l=0.7511$, and among these tables the $\kappa_q$ ranges from $0.7665$ to $0.8703$, the latter being also the maximum with quadratic weights. The maximum with the sqrt weights is $\kappa_s=0.7528$. The three configurations obtained with our algorithm are displayed in Fig.~\ref{fig:max2}. In accordance with the findings in the previous sections, we note that quadratic weights avoid strong disagremeent cells, while sqrt weights fill the diagonal as much as possible. Again, the table with maximum linear weight is not unique, and in fact the three tables in Fig.~\ref{fig:max2} share the same of $\kappa_l$.

\begin{figure}
    \centering
    \begin{tabular}{ccc}
\begin{tabular}{cc|cccc}
& & \multicolumn{4}{c}{$Y$} \\ 
 & & 1 & 2 & 3 & 4 \\ \hline
\multirow{4}{*}{$X$} & 1 & 6 & 5 & 0 & 0 \\
& 2 & 0 & 4 & 4 & 0 \\
& 3 & 0 & 0 & 7 & 0 \\
& 4 & 0 & 0 & 1 & 6 
\end{tabular}
&
\begin{tabular}{cc|cccc}
& & \multicolumn{4}{c}{$Y$} \\ 
 & & 1 & 2 & 3 & 4 \\ \hline
\multirow{4}{*}{$X$} & 1 & 6 & 3 & 2 & 0 \\
& 2 & 0 & 6 & 2 & 0 \\
& 3 & 0 & 0 & 7 & 0 \\
& 4 & 0 & 0 & 1 & 6 
\end{tabular}
&
\begin{tabular}{cc|cccc}
& & \multicolumn{4}{c}{$Y$} \\ 
 & & 1 & 2 & 3 & 4 \\ \hline
\multirow{4}{*}{$X$} & 1 & 6 & 1 & 4 & 0 \\
& 2 & 0 & 8 & 0 & 0 \\
& 3 & 0 & 0 & 7 & 0 \\
& 4 & 0 & 0 & 1 & 6 
\end{tabular} \vspace{5pt}\\
$\kappa_q= 0.8703$ (max) & $\kappa_q=0.8184$ & $\kappa_q= 0.7665$ \\
$\kappa_l= 0.7511 $ & $\kappa_l=0.7511$ (max) & $\kappa_l= 0.7511$ \\
$\kappa_s= 0.6771$ & $\kappa_s=0.7150$ & $\kappa_s= 0.7528$ (max) \\
\end{tabular}
 \caption{Configurations with maximum weighted kappa for the observed table in Fig.~\ref{fig:eslin} with quadratic weights (left), linear weights (center), sqrt weights (right).}
    \label{fig:max2}
\end{figure}

The algorithm converges very fast, at least for small- and medium-sized tables, yielding the maximum value of the weighted kappa and a table where such a maximum is reached in less than 1 second on a standard PC. For large tables the convergence takes long times, and the problem becomes fast unfeasible when the number of cells is large. In fact, on one side large tables are usually sparse, on the other side the relevant Markov basis is large, and at each step the probability of an applicable move is very low. As a consequence, for large tables the number of MCMC steps $B$ must be quite large to ensure convergence. Some experiments are shown through a simulation study in the next section. In our experiments, we have found a fast convergence for tables with up to 300 cells: on a standard PC, the algorithm runs in less than 1 second for tables up to 100 cells, and in less than 10 seconds for tables up to 300 cells.

Note that one can replace the fixed run length $B$ with a stopping rule, and this is the strategy implemented in our simulation study. For instance, in small problems one can stop the algorithm when the algorithm does produce actual moves for $1,000$ consecutive steps. For large tables the stopping rule must take into account also the cardinality of the Markov basis. More details on this point are discussed in the next section.

In general, the use of Algebraic Statistics in the case of large tables is problematic, and the curse of dimensionality is a known issue of this discipline. The definition of new techniques to speed up the convergence of MCMC algorithms within Algebraic statistics is still a current research topic, see for instance \cite{windisch:16}, and only {\it ad hoc} solutions for special problems are currently available.

\section{Simulation study} \label{sect:simst}

In order to show the practical applicability of the algorithm introduced in the previous section, and to study its convergence properties, we have designed and performed a simulation study with several scenarios. For the two-rater case, we have considered three values of the number of levels $k$ ($k=3,5,7$) and two sample sizes ($N=20,100$). Moreover, two types of marginal distributions are considered: a first case with homogeneous uniform marginals, and a second case with non-homogeneous marginals. In the first case, the tables are generated from a multinomial distribution with probabilities given by $\mu \otimes \mu$, with $\mu=(1/k, \ldots, 1/k)$, while in the second case the probability parameter of the multinomial distribution is $\mu \otimes \nu$ with $\mu \ne \nu$. In the non-homogeneous case, the parameters $\mu$ and $\nu$ are chosen to account for the tendency of a rater to choose rating levels higher or lower than those of the other rater. For instance, in the $3 \times 3$ case, we have used $\mu=(2/5,2/5,1/5)$ and $\nu=(1/5,2/5,2/5)$.

Notice that, with this procedure, we obtain observed marginal distributions also when the parameter of the multinomial distribution is of the form $\mu \otimes \mu$, and thus the problem of finding the maximum weighted kappa is not trivial even in these scenarios.

Also a simulation study for the three-rater case is presented, but limited to two numbers of categories $k=3,5$.

The convergence of the algorithm is measured as follows. The algorithm stops when there is a sufficiently large number $c$ of consecutive steps with no change in the observed agreement (and therefore without changes in the the weighted kappa). The number $c$ must take into account the number of moves in the Markov basis. We have defined here $c=\max\{10\cdot \#{\mathcal M}; 1,000\}$. This choice of $c$ is a reasonable trade-off between accuracy and speed.

For each scenario, a sample of $1,000$ tables is generated and the distribution of the stopping time $T$ is approximated through the $1,000$ observed values. The simulation study has been performed using three weighing schemes: quadratic, linear, and sqrt.

The results are displayed in Table \ref{tab:tabr2hm} for the two-rater scenarios and in Table \ref{tab:tabr3hm} for the there-rater scenarios. The mean, the standard deviation, and the $99$th percentile of the convergence time $T$ are reported. In such tables, only the results for tables with homogeneous marginals are considered. Since the results for tables with non-homogeneous marginals are very similar, they are reported as Tables \ref{tab:tabr2nhm} and \ref{tab:tabr3nhm} in the Appendix.

\begin{table}[htbp]
    \centering
    \begin{tabular}{c|c|c|c|c|c}
    Weight & $k$ & $N$ & mean & sd & $q_{0.99}$ \\ \hline
    Quadratic & 3 & 20 & 1,064.8 & 32.0 & 1,164 \\
              &   & 100 & 1,264.5 & 75.2 & 1,481  \\
              & 5 & 20 & 2,568.5 & 277.7 & 3,418  \\
              &   & 100 & 3,943.6 & 637.9 & 5,922 \\
              & 7 & 20 & 11,300.4 & 1,157.9 & 14,914  \\
              &   & 100 & 15,497.8 & 2,211.2 & 21,997 \\ \hline
       Linear & 3 & 20 & 1,049.4 & 25.3 & 1,126  \\
              &   & 100 & 1,170.8 & 47.6 & 1,307  \\
              & 5 & 20 & 2,407.3 & 249.1 & 3,233  \\
              &   & 100 & 3,064.6 & 428.6 & 4,420  \\
              & 7 & 20 & 10,532.5 & 1,042.5 & 13,784  \\
              &   & 100 & 12,358.5 & 1,516.0 & 16,941 \\ \hline
       Sqrt   & 3 & 20 & 1,055.6 & 27.3 & 1,139  \\
              &   & 100 & 1,165.5 & 44.6 & 1,290  \\
              & 5 & 20 & 2,499.6 & 257.1 & 3,298  \\
              &   & 100 & 3,088.3 & 401.9 & 4,318  \\
              & 7 & 20 & 10,955.3 & 1,018.3 & 14,054  \\
              &   & 100 & 12,872.3 & 1,539.2 & 18,235 \\ \hline
    \end{tabular}
    \caption{Two-rater case with homogeneous marginal distributions. Time to convergence (mean, standard deviation and 99th percentile) of the simulated annealing algorithm for different numbers of levels $k$ and sample sizes $N$.}
    \label{tab:tabr2hm}
\end{table}

\begin{table}[htbp]
    \centering
    \begin{tabular}{c|c|c|c|c|c}
    Weight & $k$ & $N$ & mean & sd & $q_{0.99}$ \\ \hline
    Quadratic & 3 & 20 & 4,227.8 & 491.8 & 5,778   \\
              &   & 100 & 5,753.2 & 830.9 & 8,237  \\
              & 5 & 20 & 115,590.4 & 11,611.1 & 153,122 \\
              &   & 100 & 142,858.7 & 17,776.5 & 190,784 \\ \hline
       Linear & 3 & 20 & 4,057.9 & 402.7 & 5,619  \\
              &   & 100 & 5,075.6 & 629.0 & 6,987  \\
              & 5 & 20 & 110,198.5 & 10,633.9 & 146,114   \\
              &   & 100 & 124,430.1 & 13,703.3 & 167,849 \\ \hline
       Sqrt   & 3 & 20 & 4,213.5 & 508.0 & 5,842  \\
              &   & 100 & 5,248.2 & 754.4 & 7,760  \\
              & 5 & 20 & 117,436.7 & 12,774.8 & 159,911 \\
              &   & 100 & 131,752.7 & 15,281.4 & 178,512 \\ \hline
    \end{tabular}
    \caption{Three-rater case with homogeneous marginal distributions. Time to convergence (mean, standard deviation and 99th percentile) of the simulated annealing algorithm for different numbers of levels $k$ and sample sizes $N$.}
    \label{tab:tabr3hm}
\end{table}

From the results, we see that the time to convergence increases with the sample size and with the dimension of the table, and this is particularly relevant in the three-rater case. As discussed in the previous sections, when the number of raters increases, the number of basic moves in the Markov basis grows, and the probability of selecting a non-applicable move becomes high, especially in the case of sparse tables. To overcome this problem, the definition of the stopping time $c$ requires a large number of steps when the Markov basis is large and consequently the execution time increases. For large sparse tables, the algorithm needs special attention in the choice of the numerical parameters and in the optimization of the selection of the moves. A thorough study in this direction is beyond the scopes of the present paper. That is why we do not present the case of $7 \times 7 \times 7$ tables. Finally, with regard to the choice of the weights, we observe that the algorithm is a bit faster with the linear weights.

\section{Concluding remarks} \label{sect:final}

The analysis of the kappa-type indices through basic Markov moves presented in this paper allows us to better understand the effect of the choice of the weights, and in particular shows that the the configuration with maximum kappa strongly depends on the weights, making the normalization of the kappa statistics a non trivial task. We have shown that, when the weights satisfy the triangular inequality, the table with maximum kappa looks quite different from that obtained with quadratic weights, and therefore the use of distance weights should be considered as an option when choosing the weights.

Since the basic moves make connected the fiber of all tables with the same margins, we have implemented an MCMC algorithm to actually find the configuration with maximum kappa with fixed margins in a general framework.

Future works will include the analysis of the maximum agreement when not all raters classify the same set of objects, and the speed up of the MCMC algorithm, especially for large sparse tables. The convergence of MCMC algorithms with Markov bases for large sparse tables is a general problem in Algebraic Statistics, and thus any advance in this direction would represent a notable progress also in other fields of application. Finally, we have shown that the set of all tables with a given value of weighted kappa with linear weights can be rather large. A thorough analysis of such a set can be performed with the use of suitable Markov bases.

\bibliographystyle{alpha}
\bibliography{biblio_FRagr}

\newpage

\section*{Appendix A}

In this appendix the results of the simulation study with non-homogeneous margins are reported. See Sect.~\ref{sect:simst} for the description of the simulation study.

\begin{table}[htbp]
    \centering
    \begin{tabular}{c|c|c|c|c|c}
    Weight & $k$ & $N$ & mean & sd & $q_{0.99}$ \\ \hline
    Quadratic & 3 & 20 & 1,064.7 & 34.0 & 1,172  \\
              &   & 100 & 1,180.8 & 66.7 & 1,381  \\
              & 5 & 20 & 2,572.8 & 302.0 & 3,599  \\
              &   & 100 & 3,859.4 & 698.4 & 6,099 \\
              & 7 & 20 & 11,265.9 & 1,192.6 & 14,957  \\
              &   & 100 & 15,247.9 & 2,191.4 & 22,060 \\ \hline
       Linear & 3 & 20 & 1,039.0 & 24.8 & 1,122  \\
              &   & 100 & 1,106.4 & 35.8 & 1,205  \\
              & 5 & 20 & 2,294.0 & 208.3 & 3,072  \\
              &   & 100 & 2,514.7 & 233.7 & 3,283   \\
              & 7 & 20 & 9,989.9 & 863.7 & 13,430  \\
              &   & 100 & 10,384.7 & 874.3 & 13,437 \\ \hline
         Sqrt & 3 & 20 & 1,058.8 & 33.0 & 1,169  \\
              &   & 100 & 1,222.3 & 84.9 & 1,457  \\
              & 5 & 20 & 2,528.4 & 274.4 & 3,446 \\
              &   & 100 & 3,438.4 & 596.2 & 5,350   \\
              & 7 & 20 & 11,138.2 & 1,173.5 & 15,031  \\
              &   & 100 & 13,858.6 & 1,982.0 & 19,791  \\ \hline
    \end{tabular}
    \caption{Two-rater case with non-homogeneous marginal distributions. Time to convergence (mean, standard deviation and 99th percentile) of the simulated annealing algorithm for different numbers of levels $k$ and sample sizes $N$.}
    \label{tab:tabr2nhm}
\end{table}

\begin{table}[htbp]
    \centering
    \begin{tabular}{c|c|c|c|c|c}
    Weight & $k$ & $N$ & mean & sd & $q_{0.99}$ \\ \hline
    Quadratic & 3 & 20 & 4,210.5 & 496.6 & 5,881  \\
              &   & 100 & 5,839.7 & 868.4 & 8,607 \\
              & 5 & 20 & 115,001.6 & 11,839.8 & 152,551  \\
              &   & 100 & 145,570.5 & 19,149.3 & 210,442  \\ \hline
       Linear & 3 & 20 & 4,085.2 & 466.1 & 5,825  \\
              &   & 100 & 5,395.4 & 775.2 & 7,640 \\
              & 5 & 20 & 110,907.2 & 10,893.4 & 146,344  \\
              &   & 100 & 133,778.4 & 17,015.5 & 187,251   \\ \hline
        Sqrt  & 3 & 20 & 4,201.5 & 519.9 & 5,937 \\
              &   & 100 & 5,408.6 & 791.3 & 7,958  \\
              & 5 & 20 & 117,136.2 & 12,284.0 & 156,774  \\
              &   & 100 & 136,924.9 & 16,926.4 & 187,861 \\ \hline
    \end{tabular}
    \caption{Three-rater case with non-homogeneous marginal distributions. Time to convergence (mean, standard deviation and 99th percentile) of the simulated annealing algorithm for different numbers of levels $k$ and sample sizes $N$.}
    \label{tab:tabr3nhm}
\end{table}

\end{document}